\newcommand{\ignore}[1]{}
\newcommand{\Occ}{\mathit{Occ}}
\newcommand{\VarOcc}{\mathit{vOcc}}
\newcommand{\suffix}{\mathit{suf}}
\newcommand{\prefix}{\mathit{pre}}
\newcommand{\prefDP}{\mathrm{P}}
\newcommand{\sufDP}{\mathrm{S}}
\newcommand{\derive}{\mathit{val}}
\newcommand{\height}{\mathit{height}}
\newcommand{\prefTrunc}[2]{{^{[#2]}{#1}}}
\newcommand{\sufTrunc}[2]{{#1}^{[#2]}}
\newcommand{\trPrePath}{\mathit{trPrePath}}
\newcommand{\trSufPath}{\mathit{trSufPath}}
\newcommand{\trPre}{\mathit{trPre}}
\newcommand{\trSuf}{\mathit{trSuf}}
\newcommand{\witv}{\mathit{witv}}
\newcommand{\trVarOcc}{\mathit{trvOcc}}
\newcommand{\trPreVarOcc}{\mathit{trPrevOcc}}
\newcommand{\trSufVarOcc}{\mathit{trSufvOcc}}
\newcommand{\dVarOcc}{\mathit{dvOcc}}
\newcommand{\aVarOcc}{\mathit{avOcc}}
\newcommand{\trPretr}{\mathit{trPreAnc}}
\newcommand{\trSuftr}{\mathit{trSufAnc}}
\newcommand{\trR}{\mathit{trR}}
\newcommand{\trSsum}{\mathit{trSsum}}
\newcommand{\leaf}{\mathit{leaf}}
\newtheorem{theorem}{Theorem}
\newtheorem{lemma}[theorem]{Lemma}
\newtheorem{problem}{Problem}
\newcommand{\qed}{\hfill\rule{1ex}{1em}\penalty-1000{\par\medskip}}
\date{}
\begin{document}
\title{Computing $q$-gram Frequencies on Collage~Systems}
\author{
  Keisuke Goto, Hideo Bannai, Shunsuke Inenaga, Masayuki~Takeda
}
\maketitle
%% \affiliation{Department of Informatics, Kyushu University, JAPAN \email{\{keisuke.gotou,bannai,inenaga,takeda\}@inf.kyushu-u.ac.jp}}
\begin{center}
Department of Informatics, Kyushu University, Japan\\
\texttt{\{keisuke.gotou,bannai,inenaga,takeda\}@inf.kyushu-u.ac.jp}
\end{center}

\begin{abstract}
  Collage systems are a general framework for representing 
  outputs of various text compression algorithms.
  We consider the all $q$-gram frequency problem on
  a compressed string represented as a collage system,
  and present an $O((q+h\log n)n)$-time $O(qn)$-space algorithm for calculating
  the frequencies for all $q$-grams that occur in the string.
  Here, $n$ and $h$ are respectively the size and height of the
  collage system.
\end{abstract}
\section{Introduction}
Due to the ever increasing size of data that we generate and utilize,
data is often stored in compressed form.
Since merely decompressing such large scale data can be demanding,
methods for processing compressed strings {\em as is}, that is,
processing a given compressed string without explicitly decompressing it,
has been gaining attention~\cite{lifshits07:_proces_compr_texts,navarro07:_compr,hermelin09:_unified_algor_accel_edit_distan,matsubara_tcs2009,inenaga09:_findin_charac_subst_compr_texts,goto10:_fast_minin_slp_compr_strin,philip11:_random_acces_gramm_compr_strin}.
An interesting property of these methods is that they can be
theoretically 
-- and sometimes even practically -- faster than algorithms which work
on an uncompressed representation of the same data.

{\em Collage systems}~\cite{KidaCollageTCS}
are a general framework to describe compressed representation of
strings, using grammar-like variable assignments.
The basic operations are concatenation, repetition, and truncation.
Collage systems can model outputs of various
compression algorithms~\cite{KidaCollageTCS} such as
grammar based compression algorithms
(e.g.~\cite{SEQUITUR,LarssonDCC99})
and those of the LZ-family (e.g.~\cite{LZ77,LZ78}).
By considering collage systems, it is possible to develop general processing
algorithms which can work on compressed strings generated by any of
these compression algorithms.

In this paper, we consider the problem of determining the frequencies
of all $q$-grams occurring in a string $T$, given a 
collage system representing $T$.
The problem was previously considered for {\em
  regular} collage systems (or equivalently, straight line
programs~(SLPs)~\cite{NJC97}), 
which are collage systems that contain neither truncation nor repetition:
In~\cite{inenaga09:_findin_charac_subst_compr_texts},
an $O(|\Sigma|^2n^2)$ time and $O(n^2)$ space algorithm was
presented for $q=2$,
where $|\Sigma|$ denotes the alphabet size and
$n$ is the size of the SLP.
More recently, a much simpler and more efficient $O(qn)$
time and space algorithm for general $q \geq 2$ was 
developed and was shown to be practically faster than an
algorithm working on {\em uncompressed} strings, 
when $q$ is small~\cite{goto10:_fast_minin_slp_compr_strin}.

The main contribution of this paper is
an $O((q+h\log n)n)$-time and $O(qn)$-space algorithm that
computes the frequencies for all $q$-grams that occur in a given string represented as a
collage system, where $n$ is the size of the collage system,
and $h \leq n$ is the height of the derivation tree of the collage system.
The algorithm is a non-trivial extension of the algorithm of~\cite{goto10:_fast_minin_slp_compr_strin}
so that it can deal with repetitions and truncations.
Given a collage system of size $n$ which describes a string $T$,
it is possible to construct an SLP of size $O(nh\log n)$ which describes the same string $T$.
We can then apply the algorithm of~\cite{goto10:_fast_minin_slp_compr_strin}
to the SLP, achieving an $O(qnh\log n)$-time $O(qnh\log n)$-space solution.
The new $O((q+h\log n)n)$-time $O(qn)$-space solution improves on that.

General collage systems allow for more powerful compression schemes,
for example, while an LZ77 encoded representation of size $m$ 
{\em with self-referencing} may require
$O(m^2\log m)$ size when represented as an SLP, 
it can be represented as a collage system of size $O(m\log m)$~\cite{GasieniecSWAT96}.

\section{Preliminaries}
\subsection{Strings}
Let $\Sigma$ be a nonempty finite set of symbols called the {\em alphabet}.
An element of $\Sigma^*$ is called a {\em string}.
The length of a string $T$ is denoted by $|T|$. 
The empty string $\varepsilon$ is a string of length 0,
namely, $|\varepsilon| = 0$.
For a string $T = XYZ$, $X$, $Y$ and $Z$ are called
a \emph{prefix}, \emph{substring}, and \emph{suffix} of $T$, respectively.
The $i$-th character of a string $T$ is denoted by $T[i]$
for $1 \leq i \leq |T|$, and the substring of a string $T$ that
begins at position $i$ and ends at position $j$ is denoted by
$T[i:j]$ for $0 \leq i \leq j \leq |T|-1$.
For convenience, let $T[i:j] = \varepsilon$ if $j < i$.
For any string $X$, let $X^0 = \varepsilon$ and for any integer $p \geq 1$,
let $X^p = X^{p-1}X$.
For strings $T$ and $P$, let $\Occ(T,P) = \{ i \mid T[i:i+|P|-1] = P\}$ denote the set of occurrences of $P$ in $T$.
For string $T$ and integer $k \geq 1$, 
let $\prefix(T,k)  = T[1:\min\{k, |T|\}]$
and $\suffix(T, k) = T[|T|- \min\{k, |T|\} + 1:|T|]$,
i.e., respectively the prefix and the suffix of $T$ of length at 
most $k$.

\subsection{Collage Systems}
We consider strings described by \emph{collage system}s,
proposed in~\cite{KidaCollageTCS}.
Collage systems are a general framework for representing outputs
of various compression algorithms.
A collage system $\mathcal{T}$ is a set of assignments
$\{  X_1 = expr_1, X_2 = expr_2, \ldots, X_n = expr_n \}$,
where each $X_i$ is a variable and each $expr_i$ is an expression:
\begin{equation*}
  expr_i =
  \begin{cases}
    a           & (a\in\Sigma), \hfill \mbox{(terminal symbol)}\\
    X_{\ell} X_r & (\ell,r < i ), \hfill \mbox{(concatenation)}\\
    (X_s)^p     & (s < i, p > 2), \hfill \mbox{(repetition)}\\
    \prefTrunc{X_s}{k} & (s < i, 1 \leq k < |\derive(X_s)|),\quad \hfill \mbox{(prefix truncation)}\\
    \sufTrunc{X_s}{k}  & (s < i, 1 \leq k < |\derive(X_s)|),  \hfill \mbox{(suffix truncation)}\\
  \end{cases}
\end{equation*}
where $\derive$ is a function defined below.
To simplify the presentation, our definition of collage systems
differs from the original in that we only consider a single variable
$X_n$ for the sequence part.

A collage system is said to be {\em truncation-free} if no prefix truncation nor
suffix truncation is used.
A collage system is said to be {\em regular}, 
if it is truncation-free, and no repetition is used.
(Regular collage systems are equivalent to 
straight line programs (SLPs)~\cite{NJC97}, a general framework
for grammar-based compression.)
Output of the SEQUITUR~\cite{SEQUITUR} and REPAIR~\cite{LarssonDCC99}
algorithms can be seen as a regular collage system.
Furthermore, a collage system is {\em simple}, if it is regular, and
for any variable $X_i = X_\ell X_r$, we have $|X_\ell| = 1$ or $|X_r|
= 1$. 
Output of the LZ78~\cite{LZ78} and LZW~\cite{LZW} algorithms can be seen
as a simple collage system.

To define the derivation tree of a collage system,
we introduce two special symbols $\triangleright$ and $\triangleleft$
that are not in $\Sigma$.
In any sequence over $\Sigma \cup \{\triangleright, \triangleleft\}$,
each symbol $\triangleright$ (resp.\ $\triangleleft$)
``cancels'' the immediately-right (resp.\ -left) symbol in $\Sigma$.
For any assignment $X_i=expr_i$ of a collage system $\mathcal T$,
the \emph{derivation tree} of $X_i$ is a tree with root $v$
labeled $X_i$ such that:
\begin{itemize}
\item $v$ has one subtree consisting of a single node
labeled $a$, if $expr_i = a$ ($a\in\Sigma$).
\item $v$ has two subtrees such that 
the left and the right ones are the derivation trees 
of $X_{\ell}$ and $X_r$, respectively, if $expr_i = X_{\ell} X_r$.
\item $v$ has $p$ subtrees, each of which is the derivation tree
of $X_s$, if $expr_i=(X_s)^p$.
\item $v$ has $(k+1)$ subtrees such that 
the rightmost one is the derivation tree of $X_s$ and 
the others are single-node trees labeled $\triangleright$,
if $expr_i=\prefTrunc{X_s}{k}$.
\item $v$ has $(k+1)$ subtrees such that
the leftmost one is the derivation tree of $X_s$ and
the others are single-node trees labeled $\triangleleft$,
if $expr_i=\sufTrunc{X_s}{k}$.
\end{itemize}
The \emph{derivation tree} of $\mathcal T$ is defined to be
the derivation tree of $X_n$.
Fig.~\ref{fig:DerivationTree} shows the derivation tree
of an example collage system.
We note that
the sequence of leaf-labels of the derivation tree of $\mathcal T$
is a string over $\Sigma\cup\{\triangleright, \triangleleft\}$, and
can be rewritten to $\derive(\mathcal T)$ by
applying the cancellation rules
$\triangleright c \to \varepsilon$ and
$c \triangleleft  \to \varepsilon$
for any character $c\in\Sigma$.
For example, the leaf-label sequence $\mathtt{abcabcabc}\triangleleft\triangleleft \ \mathtt{abc}$
of the derivation tree of Figure~\ref{fig:DerivationTree}
can be rewritten into $\mathtt{cabcaabc}$.

The \emph{size} of a collage system $\mathcal T$ is the number $n$ of
assignments in $\mathcal T$.
Let $\height(X_i)$ represent the height of the derivation tree of $X_i$.
The \emph{height} of a collage system $\mathcal T$, denoted by
$\height(\mathcal T)$, is defined to be $\height(X_n)$.

The \emph{truncated} derivation tree of a collages system $\mathcal{T}$ 
is the tree obtained from the derivation tree of $\mathcal{T}$ as follows: 
(1) a pair of adjacent leaves of form $\triangleright c$ or $c \triangleleft$ is removed ($c \in \Sigma$);
(2) recursively remove internal nodes if they have no children;
(3) repeat until there are no leaves that are labeled with 
$\triangleright$ or $\triangleleft$ in the tree.

We define a function $\derive$ that maps variables $X_i$ to 
strings over $\Sigma$ recursively as follows:
\[
\derive(X_i) = \begin{cases}
  a                         & \text{for $X_i = a$},\\
  \derive(X_{\ell})\derive(X_r)&  \text{for $X_i=X_{\ell}X_r$},\\
  \derive(X_s)^p            &  \text{for $X_i=(X_s)^p$},\\
  \derive(X_s)[k+1:|\derive(X_s)|]   &  \text{for $X_i=\prefTrunc{X_s}{k}$},\\
  \derive(X_s)[1:|\derive(X_s)|-k] &  \text{for $X_i=\sufTrunc{X_s}{k}$}.\\
\end{cases}
\]
A variable $X_i$ is said to \emph{derive} the string $\derive(X_i)$.
Notice that $\derive(X_i)$ is identical to the leaf-label string 
of the subtree of the truncated derivation tree of the collage system
that is rooted at node $X_i$.
A collage system $\mathcal T$ is said to \emph{derive} the string
$T = \derive(X_n)$, i.e., the string derived from the last variable
$X_n$ of $\mathcal T$.
When it is not confusing, we identify a variable $X_i$ with $\derive(X_i)$.
Let $|X_i|=|\derive(X_i)|$ for any variable $X_i$.
$|X_i|$ for all $X_i$ can be computed in a total of $O(n)$ time by a
simple iteration on the variables.
Although $|T|$ can be very large compared to $n$, we shall assume as
in previous work, that the word size is at least $\log |T|$, and hence,
values representing lengths and positions of $T$
in our algorithms can be manipulated in constant time.

For each variable of $X_i$, let 
\begin{itemize}
 \item $\VarOcc(X_i)$ be the number of subtrees rooted at $X_i$ that has exactly $\derive(X_i)$ leaves,
 \item $\trPreVarOcc(X_i)$ be the number of subtrees rooted at $X_i$ such that a non-empty proper prefix of $\derive(X_i)$ is truncated and no non-empty suffix is truncated from its leaf-label string,
 \item $\trSufVarOcc(X_i)$ be the number of subtrees rooted at $X_i$ such that a non-empty proper suffix of $\derive(X_i)$ is truncated and no non-empty prefix is truncated from its leaf-label string,
 \item $\trVarOcc(X_i)$ be the number of subtrees rooted at $X_i$ such that \emph{both} a non-empty proper prefix and a non-empty proper suffix are truncated from its leaf-label string,
\end{itemize}
in the \emph{truncated} derivation tree of a collage system $\mathcal{T}$.
Let $\aVarOcc(X_i)$ denote the number of subtrees rooted at $X_i$ in the 
(non-truncated) derivation tree 
of $\mathcal{T}$.
Let $\dVarOcc(X_i) = \aVarOcc(X_i) - \VarOcc(X_i) - \trPreVarOcc(X_i) - \trSufVarOcc(X_i) - \trVarOcc(X_i)$, i.e., $\dVarOcc(X_i)$ denotes the number of subtrees rooted at $X_i$
in the derivation tree that are completely removed in the truncated derivation tree.
For variable $X_5$ in the running example of Figure~\ref{fig:DerivationTree},
we have $\VarOcc(X_5) = 2$, $\trPreVarOcc(X_5) = 1$, $\trSufVarOcc(X_5) = 1$,
$\aVarOcc(X_5) = 4$, and $\dVarOcc(X_5) = 0$.

For each variable $X_i$ and $1 \leq k < |\derive(X_i)|$,
let $\leaf_i(k)$ denote the leaf of the derivation tree of $X_i$ 
that corresponds to the $k$-th character of $\derive(X_i)$.
In the running example of Figure~\ref{fig:DerivationTree},
$\leaf_9(6)$ is the 6th leaf of the truncated derivation tree 
that corresponds to $\derive(X_9)[6] = \mathtt{a}$.
For string $\derive(X_i)=WYZ$,
the leaves that correspond to $W$ are said to be {\em prefix leaves},
the leaves that correspond to $Y$ are said to be {\em substring leaves},
and the leaves that correspond to $Z$ is said to be {\em suffix leaves}.

\begin{figure}
 \centerline{
  \includegraphics[scale=0.75]{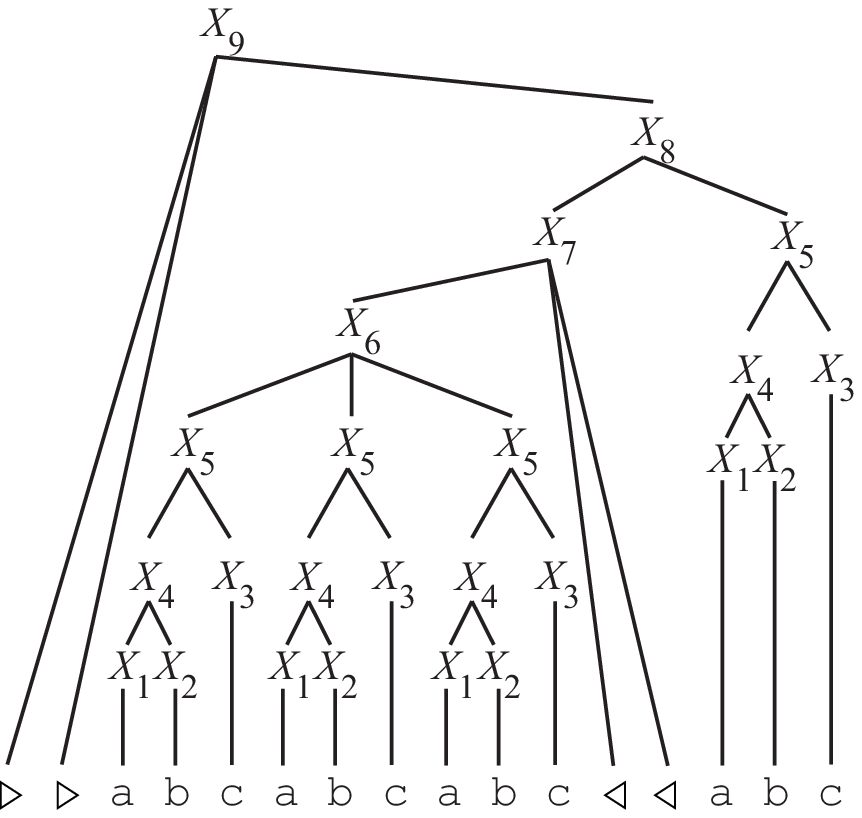}
  \hspace{0.1\textwidth}
  \includegraphics[scale=0.75]{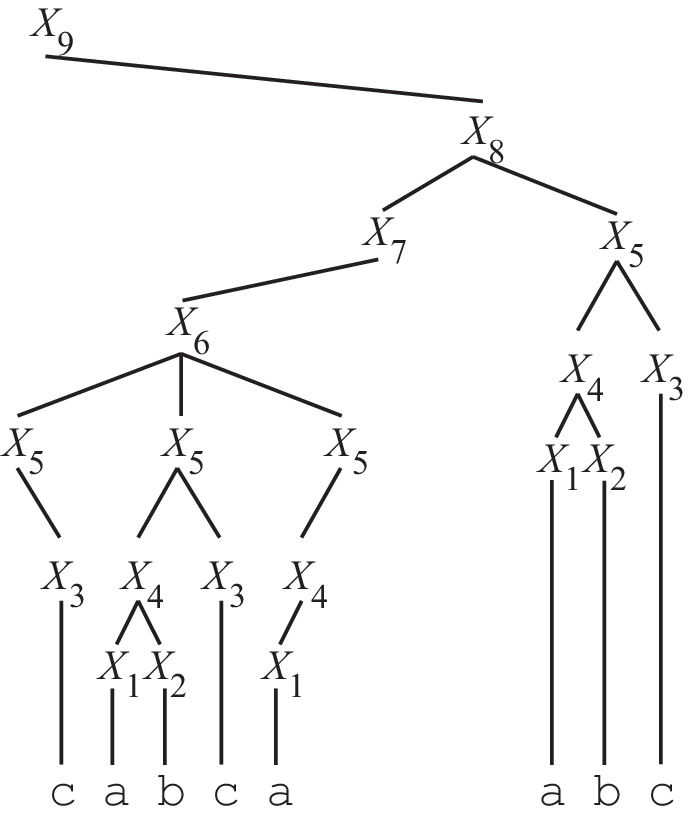}
 }
\caption{
  Derivation tree (left) and truncated derivation tree (right)
  of collage system:
  $\{X_1 = \mathtt{a}, X_2 = \mathtt{b}, X_3 = \mathtt{c}$, 
  $X_4 = X_1 X_2, X_5 = X_4 X_3, X_6 = (X_5)^3, X_7 = \sufTrunc{X_6}{2}, X_8 = X_7 X_5, X_9 = \prefTrunc{X_8}{2}\}$,
  which represents string $\mathtt{cabcaabc}$.
}
\label{fig:DerivationTree}
\end{figure}

\section{Computing $q$-gram Frequencies on Collage Systems}

The main problem we consider in this paper is the following:
\begin{problem}[$q$-gram frequencies on collage systems] \label{prob:q-gram_freq_on_collage}
  Given a collage system ${\mathcal T}$ that describes string $T$,
  compute $|Occ(T, P)|$ for all $q$-grams $P \in \Sigma^q$.
\end{problem}
For regular collage systems (SLPs), a simple and practically efficient
$O(qn)$ time and space algorithm was
recently developed~\cite{goto10:_fast_minin_slp_compr_strin}.
The basic idea is to construct, in $O(qn)$ time, a new string $T'$ of
length $O(qn)$ and an integer array $w$ of the same length so that
$\sum_{j \in Occ(T', P)} w[j] = |\Occ(T,P)|$ for all $P\in\Sigma^q$
where $\Occ(T,P)\neq\emptyset$.

We briefly describe the idea below:
for each $q$-gram occurrence in the text, we identify with it,
the lowest variable in the derivation tree of $\mathcal{T}$, which contains the
$q$-gram occurrence. Thus, we have that each $q$-gram occurrence
corresponds to a unique variable $X_i = X_\ell X_r$ such that
the $q$-gram \emph{crosses} the boundary between $X_\ell$ and $X_r$.
Noticing that all $q$-grams that are identified with $X_i$ are
contained in the string $t_i = \suffix(\derive(X_\ell),q-1) \prefix(\derive(X_r),q-1)$,
consider array $w_i$, with $w_i[j] = \VarOcc(X_i)$ for $1 \leq j \leq |t_i|-q+1$,
and $w_i[j] = 0$ for $|t_i|-q+2 \leq j \leq |t_i|$,
where $\VarOcc(X_i)$ is the number of nodes in the derivation tree
with label  $X_i$\footnote{Note that the derivation tree and the truncated derivation tree of 
any truncation-free collage system are identical.
Hence $\VarOcc(X_i) = \aVarOcc(X_i)$ and $\trPreVarOcc(X_i) = \trSufVarOcc(X_i) = \trSufVarOcc(X_i) = \trVarOcc(X_i) = \dVarOcc(X_i) = 0$ trivially hold.}.
This gives us that
$\sum_{j \in Occ(t_i, P)} w_i[j] = \VarOcc(X_i)\cdot|\Occ(t_i,P)|$
is the total number of occurrences of $q$-gram $P$ in $T$
that are identified with $X_i$, for all $P\in\Sigma^q$.
It remains to sum these values for all $n$ variables, 
that is, $|\Occ(T,P)| = \sum_{i=1}^n \VarOcc(X_i)\cdot|\Occ(t_i,P)|$.
Thus, Problem~\ref{prob:q-gram_freq_on_collage}
reduces to the following problem on
$T'=t_1\cdots t_n$ and  $w = w_1\cdots w_n$:
\begin{problem}[weighted $q$-gram frequencies] \label{prob:weighted-q-gram_frequencies}
  Given a string $T'$, an integer $q$, and integer array $w$ ($|w| = |T'|$),
  compute $\sum_{j \in Occ(T', P)} w[j]$ for all $q$-grams $P \in
  \Sigma^q$ where $\Occ(T',P)\neq\emptyset$.
\end{problem}
(Actually, $t_i$ and $w_i$ for which $|t_i| < q$ can be safely
ignored when constructing $T'$ and $w$.)
Since Problem~\ref{prob:weighted-q-gram_frequencies} is solvable in $O(|T'|)$ time
using standard string indices such as suffix arrays~\cite{manber93:_suffix},
Problem~\ref{prob:q-gram_freq_on_collage} can be solved in $O(qn)$
time and space.

Our algorithm for more general collage systems will follow this
approach of~\cite{goto10:_fast_minin_slp_compr_strin},
but with new challenges lying in the construction of $T'$ and $w$.
First, we show how to adapt the algorithm to cope with repetitions,
and then go on to describe how to further extend the algorithm to cope
with truncations.
 
\subsection{Truncation-Free Collage Systems} \label{sec:truncation_free}

\begin{theorem}
  Problem~\ref{prob:q-gram_freq_on_collage} can be solved in $O(qn)$
  time and space,
  if the collage system $\mathcal{T}$ is truncation-free.
\end{theorem}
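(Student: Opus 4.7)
The plan is to follow the reduction of~\cite{goto10:_fast_minin_slp_compr_strin}: for each variable $X_i$ build a short string $t_i$ and weight array $w_i$ so that weighted occurrences in $t_i$ account for exactly the $q$-gram occurrences in $T$ whose lowest enclosing variable is $X_i$. Concatenating the $t_i$ and $w_i$ into $T'$ and $w$ then reduces Problem~\ref{prob:q-gram_freq_on_collage} to Problem~\ref{prob:weighted-q-gram_frequencies} on a string of length $O(qn)$, which is linear time. Before defining $t_i$ I would precompute, in $O(qn)$ total time, the values $|X_i|$, $\VarOcc(X_i)$ (pushed downward from $X_n$, where a parent $X_i=(X_s)^p$ adds $p\cdot\VarOcc(X_i)$ to $\VarOcc(X_s)$), and the length-$O(q)$ strings $\prefix(\derive(X_i),q-1)$ and $\suffix(\derive(X_i),q-1)$ (bottom-up from the children, unrolling $X_s$ a few times in the repetition case when $|X_s|<q-1$).

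Terminal and concatenation expressions use the same $t_i$ as in the SLP algorithm of~\cite{goto10:_fast_minin_slp_compr_strin}. The new case is the repetition $X_i=(X_s)^p$, which introduces $p-1$ internal boundaries absent from any descendant. If $|X_s|\ge q-1$ no $q$-gram can straddle two adjacent boundaries, so all $p-1$ boundaries share one local window and I set $t_i=\suffix(\derive(X_s),q-1)\,\prefix(\derive(X_s),q-1)$ with the $q-1$ crossing positions weighted by $(p-1)\cdot\VarOcc(X_i)$. If $|X_s|<q-1$ then $X_s$ itself contains no $q$-gram, so every $q$-gram of $X_i$ is identified with $X_i$; using the period $|X_s|$ of $\derive(X_i)$, I let $t_i$ be the prefix of $\derive(X_i)$ of length $\min(|X_i|,|X_s|+q-1)$ and record at each residue $j\in\{1,\dots,|X_s|\}$ the weight $\VarOcc(X_i)$ times the (closed-form) number of $j'\in[1,|X_i|-q+1]$ with $j'\equiv j\pmod{|X_s|}$, zeroing weights at duplicate positions $j>|X_s|$. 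In every case $|t_i|=O(q)$.

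Correctness reduces to a case check: every occurrence of any $q$-gram $P$ in $T$ is identified with a unique lowest variable $X_i$, and by the constructions above the corresponding $t_i$ contributes that occurrence exactly once with weight $\VarOcc(X_i)$ folded in, yielding $|\Occ(T,P)|=\sum_{j\in\Occ(T',P)}w[j]$; applying the linear-time algorithm for Problem~\ref{prob:weighted-q-gram_frequencies} to $T'$ then gives the $O(qn)$ time and space bound. The step I expect to be the main obstacle is the short-period repetition case $|X_s|<q-1$: one must keep $|t_i|=O(q)$ even though $|X_i|$ can be astronomically large, and must suppress the duplicate residues carefully so that $q$-grams near the right end of $X_i$ are neither double counted nor missed---this is essentially the only point where the analysis departs from the SLP treatment of~\cite{goto10:_fast_minin_slp_compr_strin}.
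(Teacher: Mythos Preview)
Your proposal is correct and follows essentially the same approach as the paper: precompute $|X_i|$, $\VarOcc(X_i)$, and the length-$(q{-}1)$ prefixes/suffixes in $O(qn)$ time; handle concatenations exactly as in the SLP algorithm; and for repetitions $X_i=(X_s)^p$ split into the long-period case (one shared boundary window weighted by $(p-1)\VarOcc(X_i)$) and the short-period case (a single $O(q)$-length window with per-residue weights given in closed form). The only cosmetic difference is that you place the case boundary at $|X_s|\ge q-1$ rather than the paper's $|X_s|\ge q$, but the borderline $|X_s|=q-1$ is handled correctly by either treatment, so this is immaterial.
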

\begin{proof}
The strings $\prefix(X_i,d)$ for all variables $X_i$
can be computed in $O(dn)$ time and space using the following dynamic
programming recursion:
Let the array $\prefDP_d[i]$ hold the value of $\prefix(X_i,d)$.
\[
\begin{array}{l}
  
  \prefDP_d[i] 
  = \begin{cases}
    a & \mbox{for } X_i = a\\
    
    \prefDP_d[\ell] & \mbox{for } X_i = X_\ell X_r \mbox{ with } d \leq |X_\ell|, \\

    \prefDP_d[\ell]\cdot
    \prefix(\prefDP_d[r],d-|X_\ell|) & \mbox{for } X_i = X_\ell X_r \mbox{ with } |X_\ell| < d, \\
    (\prefDP_d[s])^y\cdot\prefix(\prefDP_d[s],d-|X_s|y) & \mbox{for } X_i = (X_s)^p, \\
  \end{cases}
\end{array}
\]
where $y = \lfloor d/|X_s| \rfloor$.
(Note that for $X_i=(X_s)^p$, we have $y=0$ and $(\prefDP_d[s])^y = \varepsilon$ when
$|X_s|\geq d$.)
Similarly, the strings $\suffix(X_i, d)$ for all variables $X_i$
can also be computed in $O(dn)$ time and space by dynamic
programming on array $\sufDP_d[i]$.

$\VarOcc(X_i)$ for all $1\leq i\leq n$ can be computed in $O(n)$
time by a simple iteration on the variables,
since $\VarOcc(X_n) = 1$ and for $i < n$,
$\VarOcc(X_i) =
\sum \{ \VarOcc(X_j) \mid X_j = X_\ell X_i \}
+
\sum \{\VarOcc(X_j) \mid X_j = X_iX_r \}
+
\sum \{\VarOcc(X_j) \cdot p \mid X_j = (X_i)^p \}
$

As mentioned previously, we extend the idea
of~\cite{goto10:_fast_minin_slp_compr_strin}
for regular collage systems
so that it handles repetitions.
For each $q$-gram occurrence in the text, 
we identify the lowest variable in the derivation tree of
$\mathcal{T}$, which contains the $q$-gram occurrence. 
For each variable of form $X_i = X_\ell X_r$ with $|X_i| \geq q$,
$t_i$ and $w_i$ are defined as in the case of regular collage systems.
For each variable of form $X_i=(X_s)^{p}$ with $|X_i| \geq q$,
there are two cases:
\begin{enumerate}
\item If $q \leq |X_s|$, then let $t_i = \suffix(\derive(X_s), q-1)\prefix(\derive(X_s), q-1)$.
There exist $p-1$ copies of $t_i$ which cross the boundary of $X_s$'s within $X_i$.
Let $w_i$ be an integer array of length $|t_i|$ such that
$w_i[j] = \VarOcc(X_i)\cdot(p-1)$ for $1 \leq j \leq |t_i|-q+1$,
and $w_i[j] = 0$ for $|t_i|-q+2 \leq j \leq |t_i|$.

\item
  If $|X_s| < q$, then let

  $t_i=\derive(X_s)\prefix(\derive(X_s)^{p-1}, q-1)$,
  which can easily be obtained in $O(q)$ time, given $\prefix(\derive(X_s),q-1)$.
  Let $y = |X_s| - ((q-1) \mod |X_s|)$.
  Then, for $1 \leq j \leq y$, $t_i[j:j+q-1]$ occurs 
  $p - \lceil q / |X_s| \rceil+1$ times in $X_i$,
  and hence we let $w_i[j] = \VarOcc(X_i) \cdot (u - \lceil q / |X_s| \rceil+1)$.
  For $y < j \leq |X_s|$, $t_i[j:j + q-1]$ occurs $p - \lceil q / |X_s| \rceil$ times
  in $X_i$,
  and hence we let $w_i[j] = \VarOcc(X_i) \cdot (p - \lceil q / |X_s| \rceil)$.
  For $|X_s| < j \leq |t_i|$, we let $w_i[j] = 0$.
\end{enumerate}

Now we construct a string $z$ by concatenating each $t_i$ 
with $q \leq |t_i| \leq 2(q-1)$,
and its corresponding weight array $w$ by concatenating 
each $w_i$ with $q \leq |w_i| \leq 2(q-1)$.
Then the problem is reduced to Problem~\ref{prob:weighted-q-gram_frequencies}
on string $z$ and weight array $w$.
The $0$'s inserted at the last parts of each $w_i$ 
avoid to count unwanted $q$-grams generated by the
concatenation of $t_i$ to $z$, which are not substrings of each $t_i$.
Since $|z| = |w| \leq 2(q-1)n$,
the problem can be solved in $O(qn)$ time.
\qed
\end{proof}

Algorithm~\ref{algo:collage_qgram_rep} in appendix 
shows a pseudo-code of our algorithm that solves Problem~\ref{prob:q-gram_freq_on_collage}
for a given truncation-free collage system.

\subsection{General Collage Systems}
We show an $O((q+h)n)$ time and $O(qn)$ space algorithm to solve
Problem~\ref{prob:q-gram_freq_on_collage} for arbitrary general
collage systems, where $h$ is the height of the collage system.

%\noindent {\bf The $\trPrePath$ and $\trSufPath$ functions.}
\subsubsection{The $\trPrePath$ and $\trSufPath$ functions}
For variable $X_i=\prefTrunc{X_s}{k}$,
the path from $X_s$ to the leaf $\leaf_i(k+1)$ in the derivation tree of
$X_s$ is called the {\em prefix truncation path} of $X_i$.
For variable $X_i=\prefTrunc{X_s}{k}$,
and $0 \leq x \leq \height(X_i)$,
let $\trPrePath_x(X_i)$ be a function that returns
triple $(X_{u(x)}, \trPre_x, \trSuf_x)$ where
$X_{u(x)}$ is the $x$-th node in the prefix truncation path,
and $X_{u(x)}[\trPre_x+1:|X_{u(x)}|-\trSuf_x]$ corresponds to the prefix of
$X_i[k+1:|X_i|]$ that is derived from this $X_{u(x)}$ in the derivation tree.
Note that the value $|X_{u(x)}|-\trPre_x - \trSuf_x$ is monotonically
non-increasing.

For variable $X_i=\prefTrunc{X_s}{k}$,
we can recursively compute $\trPrePath_x(X_i)$, as follows:
Let $\trPrePath_0(X_i) = (X_i, k,0)$, and for $x \geq 0$ let
\[
\begin{array}{lll}
  \trPrePath_{x+1}(X_i) \\
  =
  \begin{cases}
    (X_\ell, \trPre_{x}, \max(0, \trSuf_{x}-|X_r|)) & \begin{array}{l} \mbox{if } X_{u(x)}=X_\ell X_r  \\
    \quad \mbox{and } 0 \leq \trPre_{x} < |X_\ell|, \end{array} \\
    (X_r, \trPre_{x}-|X_\ell|, \trSuf_{x}) & \begin{array}{l} \mbox{if } X_{u(x)}=X_\ell X_r \\ \quad \mbox{and } |X_\ell| \leq \trPre_{x}, \end{array} \\
    \begin{array}{l} (X_e, \trPre_{x} \!\!\!\!\mod |X_e|, \\
      \quad \max\{0, \lceil \frac{\trPre_{x}}{|X_e|} \rceil \cdot |X_e|+\trSuf_{x}-|X_u|\} \!\!\!\!\mod |X_e|) \end{array}
         & \mbox{if } X_{u(x)}=(X_e)^p, \\
    (X_e, \trPre_{x} + k^\prime, \trSuf_{x}) & \mbox{if } X_{u(x)}=\prefTrunc{X_e}{k^\prime}, \\
    (X_e, \trPre_{x}, \trSuf_{x}+k^\prime) & \mbox{if } X_{u(x)}=\sufTrunc{X_e}{k^\prime}, \\
    \mbox{undefined} & \mbox{if } X_{u(x)} = a, \\
  \end{cases}
\end{array}
\]
where $\trPrePath_{x}(X_i)=(X_{u(x)}, \trPre_{x}, \trSuf_{x})$.
For instance, see Figure~\ref{fig:DerivationTree}.
There, $\trPrePath_x(X_9)$ for $0 \leq x \leq 5$ are 
respectively $(X_8, 2, 0)$, $(X_7, 2, 0)$, $(X_6, 2, 2)$, $(X_5, 2, 0)$, 
and $(X_3, 0, 0)$.

For variable $X_i=\sufTrunc{X_s}{k}$ and its {\em suffix truncation path},
$\trSufPath_x(X_i)$ can be defined and computed analogously.

\subsubsection{Computing length $q-1$ prefixes and suffixes of $\derive(X_i)$}
For all variables $X_i$ and positive integer $d$, 
let the array $\prefDP_d[i]$ (resp. $\sufDP_d[i]$)
hold the value of $\prefix(X_i,d)$ (resp. $\suffix(X_i,d)$).
The strings $\prefix(X_i, d)$ and $\suffix(X_i, d)$
can be computed in a total of $O((d+h)n)$ time and $O(dn)$ space
using a dynamic programming recursion on $\prefDP_d[i]$ and
$\sufDP_d[i]$\footnote{Unlike with truncation-free collage systems,
  $\prefDP_d[i]$ and $\sufDP_d[i]$ are not calculated independently.}.
The cases where $X_i = a$, $X_i = X_\ell X_r$ and $X_i=(X_s)^p$ were mentioned in Section~\ref{sec:truncation_free}.
If $X_i = \sufTrunc{X_s}{k}$,
then $\prefDP_d[i] = \prefix(\prefDP_d[s],|X_i|)$.
Let us now consider 
the case where $X_i = \prefTrunc{X_s}{k}$.
If $|X_i| \leq d$, $\prefDP_d[i]=\suffix(\sufDP_d[s], |X_i|)$.
Otherwise, $|X_i|>d$.
From the monotonicity of $|X_{u(x)}|-\trPre_x-\trSuf_x$,
there exists a unique integer $x$ such that
$X_{u(x)}$, $X_{u(x+1)}$ are descendants of $X_i$ where
$(X_{u(x)}, \trPre_x, \trSuf_x)=\trPrePath_x(X_i)$,
$(X_{u(x+1)},$ $\trPre_{x+1}, \trSuf_{x+1})=\trPrePath_{x+1}(X_i)$,
$|X_{u(x)}|-\trPre_x-\trSuf_x\geq d$,
$|X_{u(x+1)}|-\trPre_{x+1}-\trSuf_{x+1} < d$,
and $X_{u(x)}$ is a concatenation or repetition.
This means that $\prefix(X_i,d)$ crosses the boundary of the children
of $X_{u(x)}$ and can be represented by their suffix and prefix.
Thus, using this $X_{u(x)}$, we have for $X_i = \prefTrunc{X_s}{k}$,

\[ 
   \prefDP_d[i] = \begin{cases}
   \suffix(\sufDP_d[\ell], |X_\ell|-\trPre_x) \cdot \prefix(\prefDP_d[r], d-(|X_\ell|-\trPre_x)) & \mbox{if } X_{u(x)}=X_\ell X_r, \\

   \suffix(\sufDP_d[e], \alpha) \cdot (\prefDP_d[e])^{\beta} 
   \cdot \prefix(\prefDP_d[e], (\trPre_x + d) \!\!\!\mod |X_e|) & \mbox{if } X_{u(x)}=(X_e)^p, \\
  \end{cases}
\]
where $\alpha = |X_e| - (\trPre_x \!\! \mod |X_e|)$ and 
$\beta = \lfloor (d - (|X_e|-(\trPre_x \!\!\mod |X_e|)))/|X_e| \rfloor$.
The corresponding variable $X_u(x)$ can be found in $O(h)$ time.
$\sufDP_d[i]$ can be calculated analogously.
Since $\prefix(X_i, d)$ and $\suffix(X_i,d)$ are strings of length at most $d$,
$\prefix(X_i, d)$ and $\suffix(X_i,d)$ can be computed in a total of
$O((d+h)n)$ time and $O(dn)$ space for all variables $X_i$.

\subsubsection{Computing $\VarOcc(X_i)$}
Here, we describe how the values of 
$\aVarOcc(X_i)$, $\VarOcc(X_i)$, $\trPreVarOcc(X_i)$, $\trSufVarOcc$ $(X_i)$,
$\trVarOcc(X_i)$, and $\dVarOcc(X_i)$ are computed for each variable $X_i$.

\begin{figure}
\centerline{\includegraphics[scale=1.0]{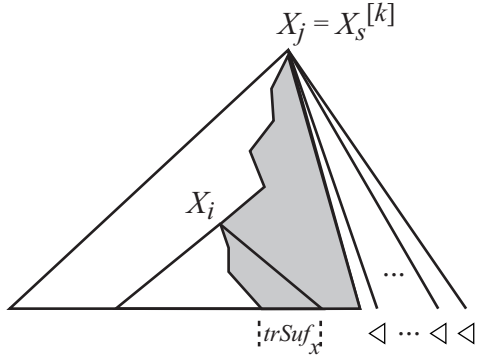}}
\caption{
   The path between the white and gray area is the suffix truncation path for $X_s$.
   $X_i$ lies in this path and 
   the suffix of $X_i$ of length $\trSuf_x > 0$ is truncated in the truncated derivation tree of $X_j$.
}
\label{fig:trSuftr}
\end{figure}

Let $\trSuftr(X_i)$ be the set of pairs $(X_j, d)$ such that
$X_j = \sufTrunc{X_s}{k}$, $X_i = X_{u(x)}$ and $d = \trSuf_x > 0$
for some $x \geq 0$,
where $\trSufPath_x(X_j) = (X_{u(x)}, \trPre_x, \trSuf_x)$.
See also Figure~\ref{fig:trSuftr}.
The suffix truncation path of $X_s$ can contain at most one node 
that is labeled with $X_i$,
and hence there is at most one such value $x$ for each pair of $i$ and $j$.
Also, the first elements of any two pairs in $\trSuftr(X_i)$ are distinct,
and therefore the size of $\trSuftr(X_i)$ does not exceed $n$.

Consider a conceptual $n \times n$ table $D$ such that 
\[
D[i,j] = 
 \begin{cases}
  \trSuf_x & \mbox{if $X_j = \sufTrunc{X_s}{k}$, $X_i = X_{u(x)}$ and $\trSuf_x > 0$
for some $x \geq 0$}, \\
  0 & \mbox{otherwise}.
 \end{cases}
\]
Obviously, the number of non-zero elements in each row $i$ does not exceed $n$.
On the other hand, the number of non-zero elements in each column $j$
does not exceed $\height(X_j)$ (see Figure~\ref{fig:trSuftr}).
Hence the total number of non-zero elements in $D$ does not exceed $nh$,
which means that $\sum_{i=1}^n |\trSuftr(X_i)| \leq nh$.

We can compute $\trSuftr(X_i)$ for all $X_i$ in a total of $O(nh)$ time,
where $h$ is the height of the collage system.
After that, we sort each $\trSuftr(X_i)$ in increasing order of 
the second value of the pairs in $\trSuftr(X_i)$.
The total time cost to sort $\trSuftr(X_i)$ for all $X_i$ is 
\[O(\sum_{i=1}^n|\trSuftr(X_i)|\log|\trSuftr(X_i)|) = O(nh \log n).\]
The $l$-th element of $\trSuftr(X_i)$ is denoted by 
$\trSuftr(X_i)[l]$ for $1 \leq l \leq |\trSuftr(X_i)|$.

$\trPretr(X_i)$ can be defined and computed analogously.

\begin{lemma} \label{lem:truncation_vOcc}
Let $\mathcal{T} = \{X_i = expr_i\}_{i=1}^n$ be a general collage system.
Assume that, for all variables $X_i = \prefTrunc{X_s}{k}$ and 
$X_{i^\prime} = \sufTrunc{X_{s^\prime}}{k^\prime}$,
$\trSuftr(X_i)$ and $\trPretr(X_{i^\prime})$ are already computed with their elements sorted.
Then, we can compute 
$\VarOcc(X_i)$,
$\trPreVarOcc(X_i)$,
$\trSufVarOcc(X_i)$,
 $\trVarOcc(X_i)$,
$\dVarOcc(X_i)$,
and $\aVarOcc(X_i)$
for all variables $X_i$ in a total of $O(nh)$ time,
where $h$ is the height of $\mathcal{T}$.
\end{lemma}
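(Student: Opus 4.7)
The plan is to compute the six quantities by combining a straightforward top-down propagation with careful path-based accounting using the precomputed $\trPretr$ and $\trSuftr$ sets. First I would compute $\aVarOcc(X_i)$ for all $i$ in $O(n)$ time by initializing $\aVarOcc(X_n) = 1$ and pushing contributions top-down in decreasing order of $i$: for each $X_j$ add $\aVarOcc(X_j)$ to each of $X_\ell, X_r$ if $X_j = X_\ell X_r$; add $p \cdot \aVarOcc(X_j)$ to $X_s$ if $X_j = (X_s)^p$; and add $\aVarOcc(X_j)$ to $X_s$ if $X_j$ is either a prefix or suffix truncation of $X_s$.

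The key observation for the remaining quantities is that an occurrence of $X_i$ has nonzero prefix (resp.\ suffix) truncation in the truncated derivation tree if and only if $X_i$ lies on some ancestor $\prefTrunc$ (resp.\ $\sufTrunc$) variable's truncation path, which is exactly what $\trPretr(X_i)$ and $\trSuftr(X_i)$ enumerate. For each $(X_j, d) \in \trPretr(X_i)$, the $\aVarOcc(X_j)$ occurrences of $X_j$ in the derivation tree each contribute one occurrence of $X_i$ on $X_j$'s truncation path with cumulative prefix truncation $d$. However, naively summing $\aVarOcc(X_j)$ over all entries overcounts, because the same occurrence of $X_i$ may lie on multiple nested truncation paths: if $X_{j'}$ and $X_j$ are both $\prefTrunc$ ancestors with $X_j$ appearing on $X_{j'}$'s truncation path, then each occurrence of $X_{j'}$ already contains an occurrence of $X_j$ on its path. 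I would exploit the sorted order of $\trPretr(X_i)$: because nested prefix truncations accumulate additively along the path, the entries sorted by increasing $d$ correspond to progressively outer ancestors relative to a fixed $X_i$-occurrence. A sweep-and-subtract scheme over this sorted list (each outer $X_{j'}$ contributes $\aVarOcc(X_{j'})$ overlaps that must be subtracted from deeper entries) then yields, for each $X_i$, the multiset of prefix truncation amounts across its occurrences; an analogous sweep over $\trSuftr(X_i)$ gives the suffix-truncation distribution.

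Finally, I would combine the prefix and suffix truncation data by pairing occurrences of $X_i$ that simultaneously lie on both a prefix and a suffix truncation path. Such occurrences are identified from entries $(X_j, d_P) \in \trPretr(X_i)$ and $(X_{j'}, d_S) \in \trSuftr(X_i)$ where $X_j$ and $X_{j'}$ lie on a common ancestor chain of the same $X_i$-occurrence; such an occurrence is then classified as $\trVarOcc$ if $d_P + d_S < |X_i|$ and as $\dVarOcc$ otherwise. Unpaired prefix- (resp.\ suffix-) truncated occurrences contribute to $\trPreVarOcc$ (resp.\ $\trSufVarOcc$), and $\VarOcc(X_i) = \aVarOcc(X_i) - \trPreVarOcc(X_i) - \trSufVarOcc(X_i) - \trVarOcc(X_i) - \dVarOcc(X_i)$. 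The total cost is $O(n)$ for $\aVarOcc$ plus $O(\sum_i (|\trPretr(X_i)| + |\trSuftr(X_i)|)) = O(nh)$ for the rest. The hard part will be the combined prefix-and-suffix analysis: ensuring that the pairing of $\prefTrunc$ with $\sufTrunc$ ancestors faithfully identifies co-occurring truncations on the very same $X_i$-occurrence, and that doing so stays within the $O(nh)$ budget rather than blowing up to something like $O(nh^2)$ from naive cross-enumeration.
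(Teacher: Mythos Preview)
Your proposal diverges from the paper's approach and has genuine gaps that would prevent it from working as stated.

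First, a mismatch with the lemma's hypothesis: the assumption only guarantees $\trSuftr(X_i)$ for \emph{prefix} truncation variables $X_i$ and $\trPretr(X_{i'})$ for \emph{suffix} truncation variables $X_{i'}$, not $\trPretr$ and $\trSuftr$ for every variable. You would need to argue that these can be extended to all variables within the budget, but more importantly the paper never needs them that way.

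Second, weighting by $\aVarOcc(X_j)$ is incorrect. An occurrence of $X_j$ that is itself prefix-truncated, suffix-truncated, or deleted does not contribute an $X_i$-occurrence with prefix cut exactly $d$; the right multiplier depends on $X_j$'s own classification, which is precisely what you are trying to compute. The paper avoids this circularity by processing variables from $n$ down to $1$ and, when handling a truncation variable $X_i$, propagating only $\VarOcc(X_i)$ (plus the already-known $\VarOcc(X_{j(m)})$ of its sorted $\trSuftr$ ancestors) along $X_i$'s own truncation path.

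Third, the ``pairing'' step is the real content of the lemma and you leave it unspecified. Sorting $\trPretr(X_i)$ by $d$ does not give a nesting order across distinct $X_i$-occurrences, so a sweep-and-subtract over that list alone cannot disentangle which prefix-ancestor belongs to which occurrence. Worse, an $X_i$-occurrence on a prefix truncation path can acquire $\trSuf_x>0$ solely because that path passes through a $\sufTrunc$ node $X_m$, yet $X_i$ need not lie on $X_m$'s suffix truncation path at all; such occurrences would be invisible to your $\trSuftr(X_i)$ list and hence misclassified.

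The paper's proof takes a different decomposition: it does not aggregate at each $X_i$. Instead, for every truncation variable $X_i$ it walks the (length $\leq h$) truncation path once, directly incrementing $\trPreVarOcc$, $\trVarOcc$, or $\dVarOcc$ at each path node and at siblings falling off the path, and it uses the sorted $\trSuftr(X_i)$ only to batch the contribution of $X_i$'s own suffix-truncated occurrences to successive segments $W_l\setminus W_{l+1}$ of that same path via a running partial sum. The $O(nh)$ bound then follows from $\sum_i |\trSuftr(X_i)|\le nh$ together with $O(h)$ per truncation variable for the path walk.
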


\begin{proof}
Clearly $\VarOcc(X_n) = \aVarOcc(X_n) = 1$ and 
$\trPreVarOcc(X_n) =$ $\trSufVarOcc(X_n)=$ $\trVarOcc(X_n) =$
$\dVarOcc(X_n)$ $=$ $0$.

Suppose that, for $i \leq n$,
we have already computed $\VarOcc(X_{i^\prime})$, 
$\aVarOcc(X_{i^\prime})$,$\trPreVarOcc(X_{i^\prime})$,
$\trSufVarOcc(X_{i^\prime})$, $\trVarOcc(X_{i^\prime})$,
$\dVarOcc(X_{i^\prime})$, $\trPretr(X_{i^\prime})$, and $\trSuftr(X_{i^\prime})$ 
for all $i \leq {i^\prime} \leq n$.
We propagate some those values to the descendants of $X_i$ as follows:

If $X_i = X_\ell X_r$, then 
there are also $\aVarOcc(X_i)$ occurrences of $X_\ell$ in the derivation tree.
Thus we increase $\aVarOcc(X_\ell)$ by $\aVarOcc(X_i)$.
There are also $\dVarOcc(X_i)$
occurrences of $X_\ell$ that are completely truncated in the truncated derivation tree.
Thus we increase $\dVarOcc(X_\ell)$ by $\dVarOcc(X_i)$.
$\aVarOcc(X_r)$ and $\dVarOcc(X_r)$ are computed similarly.
This takes a total of $O(n)$ time for all $X_i = X_\ell X_r$.

If $X_i = (X_s)^p$, then there are $p \cdot \aVarOcc(X_i)$ occurrences of 
$X_e$ in the derivation tree,
and there are $p \cdot \dVarOcc(X_i)$ occurrences of $X_e$ that are completely truncated
in the truncated derivation tree.
Thus we increase 
$\aVarOcc(X_s)$ and $\dVarOcc(X_s)$ by $p \cdot \aVarOcc(X_i)$ and
$p \cdot \dVarOcc(X_i)$, 
respectively.
This takes a total of $O(n)$ time for all $X_i = (X_s)^p$.

If $X_i = \prefTrunc{X_s}{k}$, then we increase
$\aVarOcc(X_s)$ and $\dVarOcc(X_s)$ by $\aVarOcc(X_i)$ and $\dVarOcc(X_i)$,
respectively.
For $x \geq 0$, 
let $\trPrePath_x(X_i) = (X_{u(x)}, \trPre_x, \trSuf_x)$.
Consider the path $X_{u(0)} = X_s$, $X_{u(1)}$, \ldots, $X_{u(v)}$,
where $v$ is the largest integer satisfying $\trPre_x > 0$.
By the definition of $\trPrePath$, we know that $\trPre_x > 0$
for any $0 \leq x \leq v$.
Since $\trPre_x + \trSuf_x < |\derive(X_{u(x)})|$,
we do not increase the value of $\dVarOcc(X_{u(x)})$ at this time.
We increase $\trPreVarOcc(X_{u(x)})$ if $\trSuf_x = 0$,
and $\trVarOcc(X_{u(x)})$ if $\trSuf_x > 0$,
by $\VarOcc(X_i)$, respectively.
Now we consider the nodes that lie on the left of the path.
If $X_{u(x)}$ is of form $X_{u(x)} = X_\ell X_r$ and $\trPre_x \geq |X_\ell|$,
then $X_\ell$ is completely truncated 
in the truncated derivation tree.
Hence we increase $\dVarOcc(X_\ell)$ by $\VarOcc(X_i)+\trSufVarOcc(X_i)$.
If $X_{u(x)}$ is of form $X_{u(x)} = (X_e)^p$,
then the first $\lfloor \trPre_x / |X_e| \rfloor$
repetitions of $X_e$ are completely truncated,
and hence we increase $\dVarOcc(X_\ell)$ by $\lfloor \trPre_x / |X_e| \rfloor
\cdot (\VarOcc(X_i) + \trSufVarOcc(X_i))$.

Further care is taken for the occurrences of $X_i$
whose non-empty suffix is truncated due to its ancestor corresponding to $\trSuftr(X_i)$,
as follows:
For each $1 \leq l \leq |\trSuftr(X_i)|$,
let $(X_{j(l)}, d(l)) = \trSuftr(X_i)[l]$, where $X_{j(l)} = \sufTrunc{X_p}{g}$.
By definition, on the suffix truncation path of $X_p$
there exists a subtree rooted at $X_i$ whose suffix of length $d(l)$ is truncated.
A key observation is that the nodes, which lie on the prefix truncation path 
of $X_i$ but do \emph{not} lie on the suffix truncation path of $X_{j(l)}$,
have $\VarOcc(X_{j(l)})$ occurrences in the truncated derivation tree of $X_{j(l)}$.
Let $W_l$ be the subset of these nodes
which consists of the nodes whose non-empty prefix is truncated.
For each variable $Y \in W_l$, 
either $\trPreVarOcc(Y)$ or $\trVarOcc(Y)$ has to be 
increased by $\VarOcc(X_{j(l)})$ accordingly.
For each fixed $X_i$, we have to do this for all the ancestors of $X_i$ 
corresponding to $\trSuftr(X_i)$.
If this is done separately for each ancestor, 
it takes a total of $O(n^2 h)$ time for all $i$.
We can however speed up this by processing elements of $\trSuftr(X_i)$ 
in increasing order of $d(l)$:
For each $1 \leq l \leq |\trSuftr(X_i)|$,
we propagate $\sum_{m=l}^{|\trSuftr(X_i)|} \VarOcc(X_{j(m)})$ to the nodes in $W_l - W_{l+1}$
(see also Figure~\ref{fig:trPretrSufvOcc}),
where we let $W_{|\trSuftr(X_i)|+1} = \emptyset$ for simplicity.
For each fixed $X_i$, this can take $O(n)$ time.
However, the overall time complexity is $O(nh)$ for all $X_i$,
since $\sum_{i=1}^{n}|\trSuftr(X_i)| = O(nh)$ as stated previously.
For the nodes that lie on the left of the prefix truncation path of $X_i$,
we increase their $\dVarOcc$ value by $\sum_{m = 1}^{|\trSuftr(X_i)|} \VarOcc(X_{j(m)})$.
This can also be done in $O(nh)$ time.

\begin{figure}[t]
  \centerline{
    \includegraphics[scale=1.0]{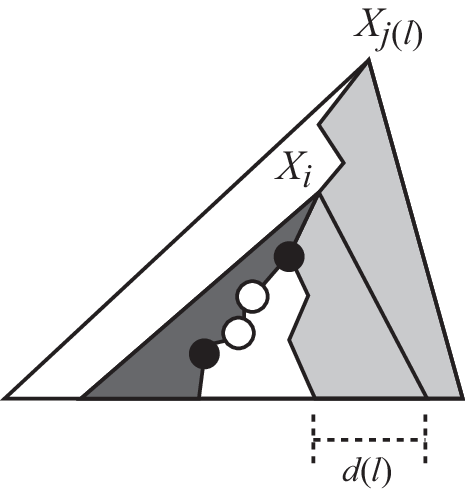}
    \hfil
    \includegraphics[scale=1.0]{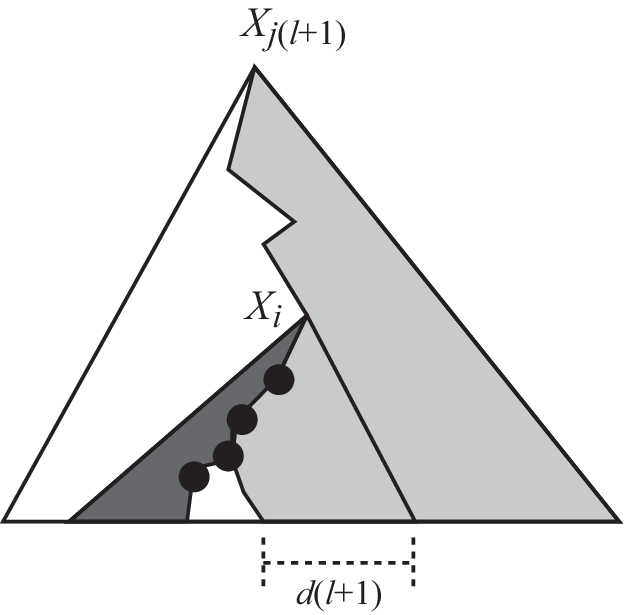}
  }
  \caption{
   The circles represent nodes (i.e. variables) that lie on the 
   prefix truncation path of $X_i = \prefTrunc{X_s}{k}$.
   The white circles in the left diagram represent nodes in $W_l - W_{l+1}$.
   For each $Y \in W_l - W_{l+1}$,
   we increase either $\trPreVarOcc(Y)$ or $\trVarOcc(Y)$
   by $\sum_{m = l}^{|\trSuftr(X_i)|} \VarOcc(X_{j(m)})$,
   depending on if a non-empty suffix of $Y$ is truncated or not.
  }
  \label{fig:trPretrSufvOcc}
\end{figure}

If $X_i = \sufTrunc{X_s}{k}$, then the values are propagated similarly 
in case of $X_i = \prefTrunc{X_s}{k}$, in a total of $O(nh)$ time.
\qed
\end{proof}

Algorithm~\ref{algo:vOcc_on_collage} in appendix shows 
a pseudo-code of our algorithm
to compute $\VarOcc(X_i)$,\\
$\trPreVarOcc(X_i)$,$\trSufVarOcc(X_i)$,
$\trVarOcc(X_i)$, $\dVarOcc(X_i)$, and $\aVarOcc(X_i)$.

\subsubsection{Construction of weight array}
As with truncation free collage systems,
we again consider reducing Problem~\ref{prob:q-gram_freq_on_collage}
to Problem~\ref{prob:weighted-q-gram_frequencies} of computing
weighted $q$-gram frequencies on a single uncompressed string.
For each $q$-gram occurrence in the text, we again identify
the lowest variable in the truncated derivation tree of $\mathcal{T}$,
which contains the $q$-gram occurrence.
Observe that, 
in this strategy no $q$-grams will be identified with a truncation variable $X$,
as there always exists a non-truncation descendant of $X$
with which the corresponding $q$-grams are identified.
Thus we construct string $t_i$ for variable $X_i=X_\ell X_r$ and $X_i=(X_s)^p$,
as in Section~\ref{sec:truncation_free}, and
it remains to set the value of $w_i[j]$ so that it represents
the total number of occurrences of the $q$-gram in the text,
corresponding to $t_i[j:j+q-1]$ derived by $X_i$.

\begin{figure}
\centerline{\includegraphics[width=0.4\textwidth]{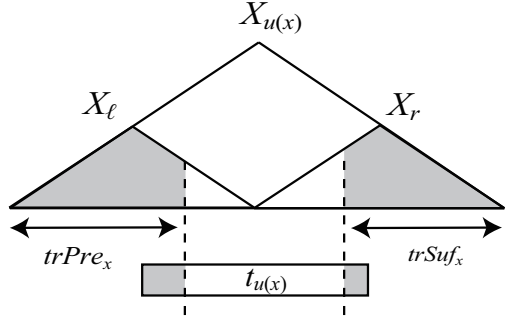}}
\caption{
    A non-empty truncated prefix and a possibly non-empty truncated suffix of 
    $X_{u(x)} = X_\ell X_r$
    are shown in gray.
    The weights for $w_{u(x)}$ are set accordingly
    for the white range of $t_{u(x)}$. 
  }
  \label{fig:weight_concatenation}
\end{figure}

Firstly, we consider complete (i.e. non-truncated) occurrences of variable $X_i$
in the truncated derivation tree of the collage system.
By definition, there are $\VarOcc(X_i)$ such occurrences, 
and hence we set the weights for $w_i$ in a similar way to Section~\ref{sec:truncation_free}.

Secondly, we consider the occurrences of $X_i$ 
where a non-empty prefix and/or non-empty suffix of the leaf-label string 
of the subtree rooted at $X_i$ is truncated in the truncated derivation tree 
of the collage system.
Consider a variable $X_y = \prefTrunc{X_s}{k}$ with $y > i$
and let $v \geq 0$ be the largest integer satisfying $\trPre_v > 0$,
where $\trPrePath_v(X_j) = (X_{u(v)}, \trPre_v, \trSuf_v)$.
Assume 
that there exists an integer $0 \leq x \leq v$ such that 
$u(x) = i$, where $\trPrePath_x(X_y)=(X_{u(x)}, \trPre_x, \trSuf_x)$.
This implies that $X_i$ lies on the prefix truncation path of $X_y$
and a non-empty prefix of $X_{i}$ is truncated 
in the truncated derivation tree of $X_y$.
We have the following cases depending on the type of $X_{u(x)}$ (recall $u(x) = i$):

If $X_{u(x)}=X_\ell X_r$, there are two sub-cases:
(1) If $\trPre_x \geq |X_\ell|$ or $\trSuf_x \geq |X_r|$,
then no $q$-grams are identified with this occurrence of $X_{u(x)} (= X_i)$.
(2) If $\trPre_x < |X_\ell|$ and $\trSuf_x < |X_r|$,
then $X_{u(x)} (= X_i)$ derives a string $\derive(X_\ell)[\trPre_x+1:|X_\ell|]\cdot \derive(X_r)[1:|X_r|-\trSuf_x])$.
Then string $t_{u(x)}[\max(1, \trPre_x-\max(0, |X_\ell|-q+1)+1):\min(q-1, |X_\ell|)-\max(0, \trSuf_x + q-1-|X_r|)+q-1]$ crosses the boundary of $X_\ell$ and $X_r$,
so we increase the weight of $w_{u(x)}[j]$ by $\VarOcc(X_i)$ for
each $j$, where
$\max(1, \trPre_x-\max(0, |X_\ell|-q+1)+1) \leq j \leq \min(q-1, |X_\ell|)-\max(0, \trSuf_x + q-1-|X_r|)$.
See also Figure~\ref{fig:weight_concatenation}.

If $X_{u(x)}=(X_e)^p$,
let $r=p-\lfloor \trPre_x / |X_s| \rfloor-\lfloor \trSuf_x / |X_s| \rfloor-2$,
this occurrence of $X_{u(x)}$ derives string
$\derive(X_e)[(\trPre_x \!\! \mod |X_e|)+1:|X_e|] \cdot 
\derive(X_e)^{\max\{0,r\}} \cdot 
\derive(X_e)[1:(\trSuf_x \!\!\mod |X_e|)-1]$.
In what follows we consider the case where $r>0$ and $|X_e|<q\leq|X_e|^r$.
Let $g=|X_e|-((q-1) \mod |X_e|)$.
There are four types of occurrences of $q$-gram $t_u[j:j+q-1]$:
$t_u[j:j+q-1]$ occurs $(r- \lceil q/|X_e| \rceil + 1)$ times for $1 \leq j \leq g$,
$t_u[j:j+q-1]$ occurs $(r- \lceil q/|X_e| \rceil)$ times for $g < j < q$,
within the $(X_e)^r$ term.
$t_u[j:j+q-1]$ occurs crossing the boundary of 
$\prefTrunc{X_e}{\trPre_x \mod |X_e|}$ and $(X_e)^r$ for $(\trPre_x \mod |X_e|) < j \leq |X_e|$.
$t_u[j:j+q-1]$ occurs crossing the boundary of $(X_e)^r$ and
$\sufTrunc{X_e}{\trSuf_x \mod |X_e|}$ for $1 \leq j \leq |X_e|-((\trSuf_x+q-1) \mod |X_e|)$.
We can set the weights of $w_{u(x)}$ for each of the 4 above ranges of $j$, accordingly.
For example,
if $X_{u(x)}=(X_e)^9$, $\trPre_x=4$, $\trSuf_x=5$,
$\derive(X_e)=\mathtt{aba}$ and $q=5$, then we have $t_{u(x)}=\mathtt{abaabaa}$ 
and $w_{u(x)}=[4,5,5,0,0,0,0]$. (See also Figure~\ref{fig:weight_rep}.)
For the other cases, we can compute the weights similarly.
Note that there are $O(h)$ variables in the prefix truncation path of 
$X_y = \prefTrunc{X_s}{k}$.
This may lead to $O(qnh)$ time complexity,
as the total length of the $w$ array is $O(qn)$.
We can however reduce the time cost to $O((q+h)n)$ using 
a differential representation $\witv$ of $w$ 
such that $w[j] = \sum_{l = 1}^{j}\witv[l]$ for every $1 \leq j \leq |w|$.
Given positive integers $b,e$ such that $1 \leq b \leq e \leq |w|$,
increasing the value of $w[j]$ for all $b \leq j \leq e$ by $d$ reduces to 
increasing the value of $\witv[b]$ by $d$ and decreasing the value of $\witv[e+1]$ by $b$,
which can be done in $O(1)$ time.

\begin{figure}[t]
  \centerline{\includegraphics[width=0.7\textwidth]{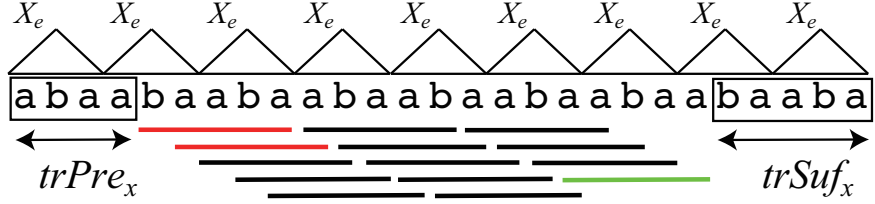}}
  \caption{
    Illustration for $\prefTrunc{X_e}{\trPre_x}(X_e)^5\sufTrunc{X_e}{\trSuf_x}$,
    $\trPre_x=4$, $\trSuf_x=5$.
    Variable $X_e$ derives the string $\mathtt{aba}$,
    and the number of $5$-grams starting inside $\prefTrunc{X_e}{\trPre_x}$ is 2,
    and the number of $5$-grams completely contained within $(X_e)^5$ is 11,
    and the number of $5$-grams ending inside $\sufTrunc{X_e}{\trSuf_x}$ is 1.
  }
  \label{fig:weight_rep}
\end{figure}

For all variables $X_i=X_\ell X_r$ and $X_i=(X_s)^p$,
we can compute weight array $\witv_i$ in $O(n)$.
For all variables $X_i=\prefTrunc{X_s}{k}$ and $X_i=\sufTrunc{X_s}{k}$,
we can compute weight arrays $\witv_{u(x)}$ for all variables $X_{u(x)}$ 
in the prefix or suffix truncation path of $X_i$, in $O(hn)$ time.
Then $w$ can be obtained by a simple scan of $\witv$ in $O(qn)$ time.

Now, we construct a string $z$ by concatenating each $t_i$ with
$q \leq |t_i| \leq 2(q-1)$, and its corresponding weight array $w$
by concatenating each $w_i$ with $q \leq |w| \leq 2(q-1)$.
Then, Problem~\ref{prob:q-gram_freq_on_collage} for a general
collage system reduces to Problem~\ref{prob:weighted-q-gram_frequencies} of
weighted $q$-gram frequencies on a single uncompressed string,
and hence we obtain:
\begin{theorem}
  Problem~\ref{prob:q-gram_freq_on_collage} can be solved in $O((q+h\log n)n)$
  time and $O(qn)$ space,
  for general collage systems.
\end{theorem}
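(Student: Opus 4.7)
The plan is to follow the same reduction from Problem~\ref{prob:q-gram_freq_on_collage} to Problem~\ref{prob:weighted-q-gram_frequencies} as in the truncation-free case, but using the \emph{truncated} derivation tree: every $q$-gram occurrence in $T$ is charged to the lowest ancestor variable in the truncated derivation tree that contains it, and that ancestor is necessarily of the form $X_i=X_\ell X_r$ or $X_i=(X_s)^p$ (never a truncation variable, since a truncation node always has a non-truncation descendant that also contains the $q$-gram). So for each concatenation or repetition variable with $|X_i|\ge q$ I would form the same short string $t_i$ of length at most $2(q-1)$ as in Section~\ref{sec:truncation_free}, concatenate all $t_i$ into a single string $z$ separated so that no spurious $q$-gram survives, and reduce to the weighted problem on $z$, which is solvable in $O(|z|)=O(qn)$ time by a standard suffix-array construction.

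The substantive work is setting up the weight array $w$ correctly. First, I would compute $\prefix(X_i,q-1)$ and $\suffix(X_i,q-1)$ for all $i$ using the dynamic programming described in the excerpt, at cost $O((q+h)n)$ time and $O(qn)$ space; these give me every $t_i$. Next I would compute $\trPretr(X_i)$ and $\trSuftr(X_i)$ in $O(nh)$ time and sort them in $O(nh\log n)$ total time, as justified by the bound $\sum_i|\trSuftr(X_i)|\le nh$. Feeding these sorted lists into Lemma~\ref{lem:truncation_vOcc} produces $\VarOcc$, $\trPreVarOcc$, $\trSufVarOcc$, $\trVarOcc$, $\dVarOcc$, and $\aVarOcc$ for every variable in $O(nh)$ additional time.

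With these counts in hand, I would populate $w$ in two stages. The easy stage adds, for every concatenation or repetition variable $X_i$ with $|X_i|\ge q$, a contribution of $\VarOcc(X_i)$ times the appropriate per-position multiplicity into $w_i$, exactly as in Section~\ref{sec:truncation_free}; this costs $O(qn)$. The delicate stage handles partial occurrences: for every truncation variable $X_y=\prefTrunc{X_s}{k}$ (and symmetrically for suffix truncation) I walk down the prefix truncation path $X_{u(0)},X_{u(1)},\ldots$ using $\trPrePath$, and at each $X_{u(x)}$ that is a concatenation or repetition I increment $w_{u(x)}$ over a range of positions determined by $\trPre_x,\trSuf_x$ and by whether the occurrence is of type $\VarOcc$, $\trPreVarOcc$, $\trSufVarOcc$, or $\trVarOcc$ (cases detailed in the excerpt, including the four sub-ranges in the repetition case shown in Fig.~\ref{fig:weight_rep}). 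Since each truncation path has length $O(h)$ and there are $n$ truncation variables, this touches $O(hn)$ ranges.

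The main obstacle is the time bound for this last stage: a naive per-position update would cost $O(qhn)$ because each range can be of length $q-1$. I would resolve this using the differential array $\witv_i$ with $w_i[j]=\sum_{l\le j}\witv_i[l]$: adding $d$ to an entire subrange becomes two $O(1)$ updates, so the whole propagation phase runs in $O(hn)$ time, and a single scan converts $\witv$ back to $w$ in $O(qn)$. Summing the costs, preprocessing is $O((q+h)n)$, sorting $\trSuftr/\trPretr$ is $O(nh\log n)$, $\VarOcc$-style counts are $O(nh)$, weight construction is $O((q+h)n)$, and the final reduction to Problem~\ref{prob:weighted-q-gram_frequencies} is $O(qn)$; the dominant term is $O((q+h\log n)n)$ time with $O(qn)$ space, which is the claim.
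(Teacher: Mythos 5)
Your proposal is correct and follows essentially the same route as the paper: the reduction to weighted $q$-gram frequencies via the truncated derivation tree, the $O((q+h)n)$ dynamic programming for length-$(q-1)$ prefixes and suffixes, the $O(nh\log n)$ computation and sorting of $\trPretr$/$\trSuftr$ feeding Lemma~\ref{lem:truncation_vOcc}, and the differential weight array $\witv$ to keep the range updates along truncation paths at $O(hn)$ total. The cost accounting matches the paper's, so nothing further is needed.
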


\newcommand{\weightRep}{weightRep}
\newcommand{\weightAlgo}{weightAlgo}
\newcommand{\weightNaive}{weightNaive}
\newcommand{\weight}{\weightNaive}

\bibliographystyle{splncs03}
\bibliography{ref}

\begin{thebibliography}{10}
\providecommand{\url}[1]{\texttt{#1}}
\providecommand{\urlprefix}{URL }

\bibitem{philip11:_random_acces_gramm_compr_strin}
Bille, P., Landau, G.M., Raman, R., Sadakane, K., Satti, S.R., Weimann, O.:
  Random access to grammar-compressed strings. In: Proc. SODA'11. pp. 373--389
  (2011)

\bibitem{GasieniecSWAT96}
G\c{a}sieniec, L., Karpinski, M., Plandowski, W., Rytter, W.: Efficient
  algorithms for {L}empel-{Z}iv encoding. In: Proc. SWAT'96. LNCS, vol. 1097,
  pp. 392--403. Springer (1996)

\bibitem{goto10:_fast_minin_slp_compr_strin}
Goto, K., Bannai, H., Inenaga, S., Takeda, M.: Towards efficient mining and
  classification on compressed strings. In: Accepted for SPIRE'11 (2011),
  available as arXiv:1103.3114v2

\bibitem{hermelin09:_unified_algor_accel_edit_distan}
Hermelin, D., Landau, G.M., Landau, S., Weimann, O.: A unified algorithm for
  accelerating edit-distance computation via text-compression. In: Proc.
  STACS'09. pp. 529--540 (2009)

\bibitem{inenaga09:_findin_charac_subst_compr_texts}
Inenaga, S., Bannai, H.: Finding characteristic substring from compressed
  texts. In: Proc. The Prague Stringology Conference 2009. pp. 40--54 (2009),
  full version to appear in the International Journal of Foundations of
  Computer Science

\bibitem{NJC97}
Karpinski, M., Rytter, W., Shinohara, A.: An efficient pattern-matching
  algorithm for strings with short descriptions. Nordic Journal of Computing
  4,  172--186 (1997)

\bibitem{KidaCollageTCS}
Kida, T., Shibata, Y., Takeda, M., Shinohara, A., Arikawa, S.: Collage system:
  A unifying framework for compressed pattern matching. Theoretical Computer
  Science  298(1),  253--272 (2003)

\bibitem{LarssonDCC99}
Larsson, N.J., Moffat, A.: Offline dictionary-based compression. In: Proc.
  DCC'99. pp. 296--305. IEEE Computer Society (1999)

\bibitem{lifshits07:_proces_compr_texts}
Lifshits, Y.: Processing compressed texts: A tractability border. In: Proc. CPM
  2007. LNCS, vol. 4580, pp. 228--240 (2007)

\bibitem{manber93:_suffix}
Manber, U., Myers, G.: Suffix arrays: A new method for on-line string searches.
  SIAM Journal on Computing  22(5),  935--948 (1993)

\bibitem{matsubara_tcs2009}
Matsubara, W., Inenaga, S., Ishino, A., Shinohara, A., Nakamura, T., Hashimoto,
  K.: Efficient algorithms to compute compressed longest common substrings and
  compressed palindromes. Theoretical Computer Science  410(8--10),  900--913
  (2009)

\bibitem{navarro07:_compr}
Navarro, G., M\"akinen, V.: Compressed full-text indexes. ACM Computing Surveys
   39(1), ~2 (2007)

\bibitem{SEQUITUR}
Nevill-Manning, C.G., Witten, I.H., Maulsby, D.L.: Compression by induction of
  hierarchical grammars. In: Proc. DCC'94. pp. 244--253 (1994)

\bibitem{LZW}
Welch, T.A.: A technique for high performance data compression. IEEE Computer
  17,  8--19 (1984)

\bibitem{LZ77}
Ziv, J., Lempel, A.: A universal algorithm for sequential data compression.
  IEEE Transactions on Information Theory  IT-23(3),  337--349 (1977)

\bibitem{LZ78}
Ziv, J., Lempel, A.: Compression of individual sequences via variable-length
  coding. IEEE Transactions on Information Theory  24(5),  530--536 (1978)

\end{thebibliography}

\clearpage
\appendix
\section{Appendix}

\begin{algorithm2e}[h]
  \caption{Calculating $q$-gram frequencies of a truncation-free
    collage system for $q\geq 2$}
  \label{algo:collage_qgram_rep}
  \SetKw{KwAnd}{and}
  \SetKwInput{KwOut}{Report}
  \KwIn{SLP ${\mathcal T} = \{X_i\}_{i=1}^n$ representing string $T$, integer $q\geq 2$.}
  \KwOut{all $q$-grams and their frequencies which occur in $T$.}
  \SetKw{KwReport}{Report}
  
  Calculate $\VarOcc(X_i)$ for all $1\leq i\leq n$\; \label{algo:slpmain:varocc}  
  Calculate $\prefix(\derive(X_i),q-1)$ and $\suffix(\derive(X_i),q-1)$ for all $1\leq
  i\leq n-1$ \; \label{algo:slpmain:prefsuf}  
  $z \leftarrow \varepsilon$; $w \leftarrow []$\;
  \For{$i \leftarrow 1$ \KwTo $n$}{\label{algo:slpmain:mainloop}
    \If{$|X_i| \geq q$}{
      \If{$X_i = X_{\ell}X_r$ \KwAnd $|X_i| \geq q$}{
        $t_i = \suffix(\derive(X_\ell),q-1)\prefix(\derive(X_r),q-1)$ \;
        $w_i \leftarrow $ create integer array of length $|t_i|$, each element set to 0 \;
        \lFor{$j \leftarrow 1$ \KwTo $|t_i|-q+1$}{
          $w_i[j] \leftarrow \VarOcc(X_i)$ \;
        }
      }
      \ElseIf{$X_i = (X_s)^p$ \KwAnd $|X_s| \geq q$}{
        $t_i = \suffix(\derive(X_s),q-1)\prefix(\derive(X_s),q-1)$ \;
        $w_i \leftarrow $ create integer array of length $|t_i|$, each element set to 0 \;
        \lFor{$j \leftarrow 1$ \KwTo $|t_i|-q+1$}{
          $w_i[j] \leftarrow \VarOcc(X_i) \cdot (p-1)$\;
        }
      }
      \ElseIf{$X_i = (X_s)^p$ \KwAnd $|X_s| < q$}{
        $t_i = \prefix(\derive(X_s)^{\min\{p,\lceil (|X_s|+q-1)/|X_s|\rceil\}}, |X_s|+q-1)$ \;
        $w_i \leftarrow $ create integer array of length $|t_i|$, each element set to 0 \;
        $y = |X_s| - ((q-1) \mod |X_s|)$ \;
        \lFor{$j \leftarrow 1$ \KwTo $y$}{
          $w_i[j] \leftarrow \VarOcc(X_i) \cdot (p - \lceil q / |X_s| \rceil+1)$\;
        }
        \lFor{$j \leftarrow y+1$ \KwTo $|X_j|$}{
          $w_i[j] \leftarrow \VarOcc(X_i) \cdot (p - \lceil q / |X_s| \rceil)$\;
        }
      }
      $z$.append($t_i$)\; \label{algo:slpmain:zappend}
      $w$.append($w_i$)\; \label{algo:slpmain:append0}
    }
  }
  \KwReport  
  $q$-gram frequencies in $z$, where each $q$-gram $z[i:i+q-1]$
  is {\em weighted} by $w[i]$.\label{algo:slpmainweightedfreqs}
\end{algorithm2e}

\begin{algorithm2e}[t]
  \caption{Calculate $\VarOcc(X_i)$ for all variables of general collage system}
  \label{algo:vOcc_on_collage}
  \SetKw{KwBreak}{break}
  \SetKw{KwTrue}{true}
  \SetKw{KwFalse}{false}
  \SetKw{KwAnd}{and}
  \SetKw{KwOr}{or}
  \KwIn{A general collage system $\mathcal{T} = \{X_i\}_{i=1}^n$}
  \KwOut{$\VarOcc(X_i)$ for all $1 \leq i \leq n$}

  compute $\trPretr(X_i)$, $\trSuftr(X_i)$ for all variable $X_i$ \;

    Initialize the values of $\aVarOcc(X_i)$, $\VarOcc(X_i)$,
    $\trPreVarOcc(X_i)$, $\trSufVarOcc(X_i)$, $\trVarOcc(X_i)$,
    $\dVarOcc(X_i)$ to 0 for all $X_i$ \;
  $\aVarOcc(X_n) \leftarrow 1$ \;
  \For{$i \leftarrow n$ \KwTo $1$}{
    
    $\VarOcc(X_i) \leftarrow \aVarOcc(X_i)-\dVarOcc(X_i)-\trVarOcc(X_i)-\trPreVarOcc(X_i)-\trSufVarOcc(X_i)$ \;
    
    \If{$X_i = X_\ell X_r$}{
      $\aVarOcc(X_\ell) \leftarrow \aVarOcc(X_\ell) + \aVarOcc(X_i)$;
      $\aVarOcc(X_r) \leftarrow \aVarOcc(X_r) + \aVarOcc(X_i)$ \;
      $\dVarOcc(X_\ell) \leftarrow \dVarOcc(X_\ell) + \dVarOcc(X_i)$;
      $\dVarOcc(X_r) \leftarrow \dVarOcc(X_r) + \dVarOcc(X_i)$ \;
    }\ElseIf{$X_i=(X_s)^p$}{
      $\aVarOcc(X_s) \leftarrow \aVarOcc(X_s) + p*\aVarOcc(X_i)$ \;
      $\dVarOcc(X_s) \leftarrow \dVarOcc(X_s) + p*\dVarOcc(X_i)$ \;
    }\ElseIf{$X_i=\prefTrunc{X_s}{k}$}{
      
      $\aVarOcc(X_s) \leftarrow \aVarOcc(X_s) + \aVarOcc(X_i)$ ;
      $\dVarOcc(X_s) \leftarrow \dVarOcc(X_s) + \dVarOcc(X_i)$ \;
      $x \leftarrow 0$ ;
      $l \leftarrow 1$ ;
      $\trR \leftarrow 0$ ;
      $\trSsum \leftarrow 0$ \; 
      $(X_{u(x)}, \trPre_x, \trSuf_x) \leftarrow \trPrePath_x(X_s, k)$ \;
      \While{$\trPre_x > 0$}{

        $(X_{j(l)}, d(l)) \leftarrow \trSuftr(X_i)[l]$ \;
        \While{$\trR \leq d(l)$}{
          $\trSsum \leftarrow \trSsum+\VarOcc(X_{j(l)})$ \;
          $l \leftarrow l+1$ ;
          $(X_{j(l)}, d(l)) \leftarrow \trSuftr(X_i)[l]$ \;
        }
        \tcp{propagate $\VarOcc(X_i)+\sum_{m=1}^{l-1} \VarOcc(X_{j(m)})$ to nodes in $W_l-W_{l+1}$.}

        \lIf{$\trSuf_x>0$}{
          $\trVarOcc(X_{u(x)}) \leftarrow \trVarOcc(X_{u(x)}) + \VarOcc(X_i)+\trSsum$ \;
          
        }\lElse{
          $\trPreVarOcc(X_{u(x)}) \leftarrow \trPreVarOcc(X_{u(x)}) + \VarOcc(X_i)+\trSsum$ \;
          
        }

        \If{$X_{u(x)}=X_\ell X_r$}{
          \If{$|X_\ell| \leq \trPre_x$}{
            $\dVarOcc(X_\ell) \leftarrow \dVarOcc(X_\ell) + (\VarOcc(X_i)+\trSufVarOcc(X_i))$ \;
          }\lElse{
            $\trR \leftarrow \trR+|X_r|$ \;
          }
        }\ElseIf{$X_{u(x)}=(X_e)^p$}{
          $\dVarOcc(X_e) \! \leftarrow \! \dVarOcc(X_e) +
          \lfloor \trPre_x / |X_e| \rfloor \! * \! (\VarOcc(X_i)+\trSufVarOcc(X_i))$ \;
          
          $\trR \leftarrow \trR+p-\lceil |\trPre_x|/|X_e|\rceil$ \;
        }
	$x \leftarrow x + 1$ ;
        $(X_{u(x)}, \trPre_x, \trSuf_x) \leftarrow \trPre_xPath_x(X_s, k)$ \;
      }

    }\ElseIf{$X_i=\sufTrunc{X_s}{k}$}{
      \tcp{omitted: analogous to prefix truncation}

    }

  }
\end{algorithm2e}

\end{document}